\documentclass[11pt,a4paper]{llncs}

\usepackage[T1]{fontenc}
\usepackage[latin1]{inputenc}
\usepackage{algorithm}
\usepackage{algorithmic}
\usepackage{float}
\usepackage{graphicx,color}
\usepackage{xspace}
\usepackage{rotating}
\usepackage{geometry}

\newcommand{\Order}[1]{\ensuremath{\mathcal{O}(#1)}}


\newcommand{\nn}[0]{\ensuremath{n_I}}

\newcommand{\SNT}{\mathtt{\symbol{36}}}
\newcommand{\Comment}[1]{{}}

\newcommand{\LCP}[0]{\ensuremath{\mathsf{LCP}}}
\newcommand{\PLCP}[0]{\ensuremath{\mathsf{PLCP}}}

\newcommand{\SUF}[0]{\ensuremath{\mathsf{SA}}}
\newcommand{\LF}[0]{\ensuremath{\mathsf{LF}}}

\newcommand{\lastocc}[0]{\ensuremath{\mathsf{last\_occ}}}

\newcommand{\InverseSUF}[0]{\mathsf{ISA}}

\newcommand{\BWT}{\ensuremath{\mathsf{BWT}}}
\newcommand{\SF}[1]{S_{\SUF[#1]}}

\newcommand{\Keyw}[1]{{\bf #1}}

\newcommand{\ECOM}[1]{\hspace*{0.1cm}%
                     \texttt{\symbol{47}\symbol{42}}~%
                     \mbox{\begin{small}#1\end{small}}~%
                     \texttt{\symbol{42}\symbol{47}}}
\newcommand{\rmq}[0]{\mathsf{RMQ}}

\begin{document}
\title{Lightweight LCP-Array Construction in Linear Time}

\author{Simon Gog and Enno Ohlebusch}
\institute{Institute of Theoretical Computer Science, University of Ulm,
D-89069 Ulm\\
\email{Simon.Gog@uni-ulm.de, Enno.Ohlebusch@uni-ulm.de}}

\maketitle

\begin{abstract}
The suffix tree is a very important data structure in string processing,
but it suffers from a huge space consumption. In large-scale applications, 
compressed suffix trees (CSTs) are therefore used instead. A CST
consists of three (compressed) components: the suffix array, the
$\LCP$-array, and data structures for simulating 
navigational operations on the suffix tree. The $\LCP$-array
stores the lengths of the longest common prefixes of lexicographically 
adjacent suffixes, and it can be computed in linear time.
In this paper, we present new $\LCP$-array construction algorithms
that are fast and very space efficient. In practice, our algorithms
outperform the currently best algorithms.
\end{abstract}

\section{Introduction}

A suffix tree for a string $S$ of length $n$ is a compact trie storing
all the suffixes of $S$. It is an extremely important data structure with
applications in string matching, bioinformatics, and
document retrieval, to mention only a few examples; see e.g.\ \cite{GUS:1997}.
The drawback of suffix trees is their huge space consumption of about 20 times 
the text size, even in carefully engineered implementations.
To reduce their size, several authors provided compressed suffix trees (CSTs);
see e.g.\ \cite{SAD:2007} and \cite{NAV:MAK:2007} for a survey.
A CST of $S$ can be divided into three components: (1) the suffix array
$\SUF$, specifying the lexicographic order of $S$'s suffixes, 
(2) the $\LCP$-array, storing the lengths of
the longest common prefixes of lexicographically adjacent suffixes,
and (3) additional data structures for simulating 
navigational operations on the suffix tree.

Particular emphasis has been put on efficient construction algorithms for all
three components of CSTs. (Here, ``efficiency'' encompasses both construction
\emph{time} and \emph{space}, as the latter can cause a significant memory
bottleneck.) This is especially true for the first component. In the last
decade, much effort has gone into the developement of efficient suffix array 
construction algorithms (SACAs); see \cite{PUG:SMY:TUR:2007} for a survey.
Although linear-time direct SACAs were known since 2003, experiments showed 
\cite{PUG:SMY:TUR:2007} that these were outperformed in practice by SACAs 
having a worst-case time compexity of $O(n^2 \log n)$.
To date, however, the fastest SACA is a linear time 
algorithm \cite{NON:ZHA:CHA:2009}. 
Interestingly, for ASCII alphabet its speed can compete with the 
fastest \LCP-array construction algorithms (LACA) which uses equal or less space.
This is somewhat surprising because sorting all suffixes
seems to be more difficult than computing lcp-values.

As discussed in Section \ref{sec-Related work}, today's best LACAs
\cite{KAR:MAN:PUG:2009,KAS:2001} are linear time algorithms, but they
suffer from a poor locality behavior. In this paper, we present
two very space efficient (using $n$ or $2n$ bytes only) and fast LACAs.
Based on the observation that one cache miss takes approximately the time 
of $20$ character comparisons, we try to trade character comparisons for 
cache misses. The algorithms use the text (string) $S$, the suffix array, 
and the Burrows-Wheeler Transform ($\BWT$). Since most CSAs are based 
on the $\BWT$ anyway, we basically get it for free.
Our experiments show the significance of the algorithms.   
More precisely,
experimental results show that our algorithms outperform
state-of-the-art algorithms \cite{KAR:MAN:PUG:2009,KAS:2001}.
For large texts they are always faster
than the previously best algorithms
. The superiority of our new LACAs
varies with the text size (the larger the better), the alphabet size
(the smaller the better), the number of ``large''
values in the $\LCP$-array (the less the better), and
the runs in the $\BWT$ (the more the better). 
The algorithms work particularly well on two types of data that are
of utmost importance in practice: long DNA sequences
(small alphabet size) and large collections of XML documents 
(long runs in the BWT). 

\section{Related work}
\label{sec-Related work}
In their seminal paper \cite{MAN:MYE:1993}, Manber and Myers did not only
introduce the suffix array but also the longest-common-prefix ($\LCP$)
array. They showed
that both the suffix array and the $\LCP$-array can be constructed
in $O(n \log n)$ time for a string of length $n$. Kasai et al.\ \cite{KAS:2001}
gave the first linear time algorithm for the computation of the $\LCP$-array.
Their algorithm uses the string $S$, the suffix array, the inverse
suffix array, and of course the $\LCP$-array. 
Each of the arrays requires $4n$ bytes (under the assumption that $n<2^{32}$),
thus the algorithms needs $13n$ bytes in total (for an ASCII alphabet).
The main advantage of their algorithm is that it is simple and uses at most
$2n$ character comparisons. But its poor locality behavior results in many 
cache misses, which is a severe disadvantage on current computer architectures.
Manzini \cite{MAN:2004} reduced the space occupancy of Kasai et al.'s 
algorithm to $9n$ bytes with a slow down of about $5\% -10\%$.
He also proposed an even more space-efficient (but slower)
algorithm that overwrites the suffix array. Recently,
K\"arkk\"ainen et al.\ \cite{KAR:MAN:PUG:2009} proposed another variant
of Kasai et al.'s algorithm, which computes a permuted $\LCP$-array 
($\PLCP$-array). In the $\PLCP$-array, the lcp-values are in text order
(position order) rather than in suffix array order (lexicographic order).
This algorithm takes only $5n$ bytes and is much faster than Kasai et al.'s 
algorithm because it has a much better locality behavior. However,
in virtually all applications lcp-values are required to be in suffix array 
order, so that in a final step the $\PLCP$-array must be converted into
the $\LCP$-array. Although this final step suffers (again) from a poor 
locality behavior, the overall algorithm is still faster than 
Kasai et al.'s.
In a different approach, Puglisi and Turpin \cite{PUG:TUR:2008} tried
to avoid cache misses by using the difference
cover method of K\"arkk\"ainen and Sanders \cite{KAR:SAN:2003}.
The worst case time complexity of their algorithm is $\Order{nv}$
and the space requirement is $n+\Order{n/\sqrt{v}+v}$ bytes, where
$v$ is the size of the difference cover. Experiments 
showed that the best run-time is achieved for $v=64$, but the algorithm
is still slower than Kasai et al.'s. This is because it uses
constant time range minimum queries, which take considerable time in practice.
To sum up, the currently best LACA is that of
K\"arkk\"ainen et al.\ \cite{KAR:MAN:PUG:2009}.

\section{Preliminaries}
\label{sec-Preliminaries}
Let $\Sigma$ be an ordered alphabet whose smallest element is the
so-called sentinel character $\SNT$. If $\Sigma$ consists of
$\sigma$ characters and is fixed, then we may view $\Sigma$ as an array
of size $\sigma$ such that the characters appear in ascending
order in the array $\Sigma[0..\sigma -1]$, i.e., 
$\Sigma[0] =\SNT < \Sigma[1] < \dots < \Sigma[\sigma -1]$. In the following,
$S$ is a string of length $n$ over $\Sigma$ having the sentinel character
at the end (and nowhere else). For $0 \leq i \leq n-1$, 
$S[i]$ denotes the \emph{character at position} $i$ in $S$.
For $i \leq j$, $S[i..j]$ denotes the \emph{substring} of $S$ starting with 
the character at position $i$ and ending with the character at position $j$.
Furthermore, $S_i$ denotes the suffix $S[i..n-1]$ of $S$.
The \emph{suffix array} $\SUF$ of the string $S$
is an array of integers in the range $0$ to $n-1$ specifying the
lexicographic ordering of the $n$ suffixes of the string $S$,
that is, it satisfies $\SF{0} < \SF{1} < \cdots < \SF{n-1}$;
see Fig.\ \ref{fig:suffix array} for an example.
In the following, $\InverseSUF$ denotes the inverse of the permutation $\SUF$.

The $\LCP$-array is an array containing the lengths of the 
longest common prefix between every pair of consecutive suffixes in $\SUF$.
We use $lcp(u,v)$ to denote the length of the longest common prefix between 
strings $u$ and $v$. Thus, the lcp-array is an array of integers in the range 
$0$ to $n$ such that $\LCP[0] = -1$, $\LCP[n] = -1$, and
$\LCP[i] = lcp(\SF{i-1},\SF{i})$ for $1\leq i \leq n-1$;
see Fig.\ \ref{fig:suffix array}. For $i<j$, a range minimum query $\rmq(i,j)$
on the interval $[i..j]$ in the $\LCP$-array
returns an index $k$ such that $\LCP[k]=\min \{\LCP[l]\mid i \le l\le j\}$.
It is not difficult to show that $lcp(\SF{i},\SF{j}) = \LCP[\rmq(i+1,j)]$.

The Burrows and Wheeler transform \cite{BUR:WHE:1994} converts a string $S$ 
into the string $\BWT[0..n-1]$ defined by $\BWT[i]=S[\SUF[i] -1]$ for all 
$i$ with $\SUF[i] \neq 0$ and $\BWT[i] = \SNT$ otherwise;
see Fig.\ \ref{fig:suffix array}.
The $\LF$-mapping is defined by $\LF[i] = \InverseSUF[\SUF[i]-1]$ for all 
$i$ with $\SUF[i] \neq 0$ and $\LF[i] = 0$ otherwise;
see Fig.\ \ref{fig:suffix array}. Its long name 
\emph{last-to-first column mapping} stems from the fact that it maps
the last column $L = \BWT$ to the first column $F$, where $F$ contains the 
first character of the suffixes in the suffix array, i.e., $F[i] = S[\SUF[i]]$.
More precisely, if $\BWT[i]=c$ is the $k$-th occurrence of character
$c$ in $\BWT$, then $j = \LF[i]$ is the index such
that $F[j]$ is the $k$-th occurrence of $c$ in $F$.
The $\LF$-mapping can be implemented by
$\LF[i] = C[c]+ occ(c,i)$, where $c = \BWT[i]$,
$C[c]$ is the overall number (of occurrences)
of characters in $S$ which are strictly smaller than $c$, and
$occ(c,i)$ is the number of occurrences of the character 
$c$ in $\BWT[1..i]$. 

\begin{figure}[t]
\begin{footnotesize}
\begin{minipage}[b]{0.5\linewidth}
\centering
\begin{tabular}{rrrrclp{1cm}}
$i$ & $\SUF$ & $\LCP$& $\LF$ & $\BWT$ &$S$                \\
\tt{0} & \tt{18} & \tt{-1} & \tt{15} &n& \tt{\$}                   \\
\tt{1} & \tt{2} & \tt{0} & \tt{11} &l& \tt{\_anele\_lepanelen\$}  \\
\tt{2} & \tt{8} & \tt{1} & \tt{5} &e& \tt{\_lepanelen\$}          \\
\tt{3} & \tt{3} & \tt{0} & \tt{1} &\_& \tt{anele\_lepanelen\$}    \\
\tt{4} & \tt{12} & \tt{5} & \tt{18} &p& \tt{anelen\$}             \\
\tt{5} & \tt{7} & \tt{0} & \tt{12} &l& \tt{e\_lepanelen\$}        \\
\tt{6} & \tt{0} & \tt{1} & \tt{0} &\$& \tt{el\_anele\_lepanelen\$}\\
\tt{7} & \tt{5} & \tt{2} & \tt{16} &n& \tt{ele\_lepanelen\$}      \\
\tt{8} & \tt{14} & \tt{3} & \tt{17} &n& \tt{elen\$}               \\
\tt{9} & \tt{16} & \tt{1} & \tt{13} &l& \tt{en\$}                 \\
\tt{10} & \tt{10} & \tt{1} & \tt{14} &l& \tt{epanelen\$}          \\
\tt{11} & \tt{1} & \tt{0} & \tt{6} &e& \tt{l\_anele\_lepanelen\$} \\
\tt{12} & \tt{6} & \tt{1} & \tt{7} &e& \tt{le\_lepanelen\$}       \\
\tt{13} & \tt{15} & \tt{2} & \tt{8} &e& \tt{len\$}                \\
\tt{14} & \tt{9} & \tt{2} & \tt{2} &\_& \tt{lepanelen\$}          \\
\tt{15} & \tt{17} & \tt{0} & \tt{9} &e& \tt{n\$}                  \\
\tt{16} & \tt{4} & \tt{1} & \tt{3} &a& \tt{nele\_lepanelen\$}     \\
\tt{17} & \tt{13} & \tt{4} & \tt{4} &a& \tt{nelen\$}              \\
\tt{18} & \tt{11} & \tt{0} & \tt{10} &e& \tt{panelen\$}           \\ 
\end{tabular}
\end{minipage}
\hspace{1.2cm}
\begin{minipage}[b]{0.3\linewidth}
\centering
\begin{tabular}{rrrrrrrrrrrr}
$i/j$\ \ &     1&     2&     3&     4&     5&     6&     7&     8&     9&    10&    11\\
\tt{0}\ \  &\tt{\bf{-1}}&\tt{\bf{-1}}&\tt{\bf{-1}}&\tt{\bf{-1}}&\tt{\bf{-1}}&\tt{\bf{-1}}&\tt{\bf{-1}}&\tt{\bf{-1}}&\tt{\bf{-1}}&\tt{\bf{-1}}&\tt{\bf{-1}}\\	\hline
\tt{1}\ \  &    &\tt{\bf{0}}&\tt{\bf{0}}&\tt{\bf{0}}&\tt{\bf{0}}&\tt{\bf{0}}&\tt{\bf{0}}&\tt{\bf{0}}&\tt{\bf{0}}&\tt{\bf{0}}&\tt{\bf{0}}\\	
\tt{2}\ \  &   &   &\tt{\bf{1}}&\tt{\bf{1}}&\tt{\bf{1}}&\tt{\bf{1}}&\tt{\bf{1}}&\tt{\bf{1}}&\tt{\bf{1}}&\tt{\bf{1}}&\tt{\bf{1}}\\	\hline
\tt{3}\ \  &   &   &   &\tt{\bf{0}}&\tt{\bf{0}}&\tt{\bf{0}}&\tt{\bf{0}}&\tt{\bf{0}}&\tt{\bf{0}}&\tt{\bf{0}}&\tt{\bf{0}}\\	
\tt{4}\ \  &   &   &   &    &\tt{\bf{5}}&\tt{\bf{5}}&\tt{\bf{5}}&\tt{\bf{5}}&\tt{\bf{5}}&\tt{\bf{5}}&\tt{\bf{5}}\\	\hline
\tt{5}\ \  &   &   &\tt{\bf{0}}&\tt{\bf{0}}&\tt{\bf{0}}&\tt{\bf{0}}&\tt{\bf{0}}&\tt{\bf{0}}&\tt{\bf{0}}&\tt{\bf{0}}&\tt{\bf{0}}\\	
\tt{6}\ \  &   &   &   &    &  &  &\tt{\bf{1}}&\tt{\bf{1}}&\tt{\bf{1}}&\tt{\bf{1}}&\tt{\bf{1}}\\	
\tt{7}\ \  &   &   &   &    &  &  &  &\tt{\bf{2}}&\tt{\bf{2}}&\tt{\bf{2}}&\tt{\bf{2}}\\	
\tt{8}\ \  &   &   &   &    &  &  &  &  &\tt{\bf{3}}&\tt{\bf{3}}&\tt{\bf{3}}\\	
\tt{9}\ \  &   &   &   &    &  &  &  &  &  &\tt{\bf{1}}&\tt{\bf{1}}\\	
\tt{10}\ \ &   &   &   &    &  &  &  &  &  &  &\tt{\bf{1}}\\    \hline	
\tt{11}\ \ &   &\tt{\bf{0}}&\tt{\bf{0}}&\tt{\bf{0}}&\tt{\bf{0}}&\tt{\bf{0}}&\tt{\bf{0}}&\tt{\bf{0}}&\tt{\bf{0}}&\tt{\bf{0}}&\tt{\bf{0}}\\	
\tt{12}\ \ &   &   &   &    &  &\tt{\bf{1}}&\tt{\bf{1}}&\tt{\bf{1}}&\tt{\bf{1}}&\tt{\bf{1}}&\tt{\bf{1}}\\	
\tt{13}\ \ &   &   &   &    &  &  &  &  &  &\tt{\bf{2}}&\tt{\bf{2}}\\	
\tt{14}\ \ &   &   &   &    &  &  &  &  &  &  &\tt{\bf{2}}\\	\hline
\tt{15}\ \ &\tt{\bf{0}}&\tt{\bf{0}}&\tt{\bf{0}}&\tt{\bf{0}}&\tt{\bf{0}}&\tt{\bf{0}}&\tt{\bf{0}}&\tt{\bf{0}}&\tt{\bf{0}}&\tt{\bf{0}}&\tt{\bf{0}}\\	
\tt{16}\ \ &   &   &   &    &  &  &  &\tt{\bf{1}}&\tt{\bf{1}}&\tt{\bf{1}}&\tt{\bf{1}}\\	
\tt{17}\ \ &   &   &   &    &  &  &  &  &\tt{\bf{4}}&\tt{\bf{4}}&\tt{\bf{4}}\\	\hline
\tt{18}\ \ &   &   &   &    &\tt{\bf{0}}&\tt{\bf{0}}&\tt{\bf{0}}&\tt{\bf{0}}&\tt{\bf{0}}&\tt{\bf{0}}&\tt{\bf{0}}\\	
\end{tabular}
\end{minipage}
\caption{Left: Suffix array, $\LCP$-array, $\LF$-mapping, and $\BWT$ of the 
string $S = \tt{el\_anele\_lepanelen\$}$.
Right: The $\LCP$-array after the $j$th iteration of 
Algorithm \ref{alg-first phase} (omitted entries are not computed yet).
\label{fig:suffix array}}
\end{footnotesize}
\end{figure}

\section{First algorithm}
In this section, we present our first LACA. A pseudo-code description
can be found in Algorithm \ref{alg-first phase} and an application
of it is illustrated in Fig.\ \ref{fig:suffix array}. Furthermore,
Theorem \ref{alg-first phase} does not only prove its correctness but also
explains it. The algorithm is based on Lemma \ref{lem-LCP[LF[i]]}, which
in turn requires the following definition.

Define a function $prev$ by
\[
prev(i) = \max \{j \mid 0\leq j < i \mbox{ and } \BWT[j] = \BWT[i]\}
\]
where $prev(i) = -1$ if the maximum is taken over an empty set. Intuitively,
if we start at index $i$ and scan the $\BWT$ upward, then
$prev(i)$ is the first index at which the same character $\BWT[i]$ occurs.

\begin{lemma}
\label{lem-LCP[LF[i]]}
\[
\LCP[\LF[i]] =\left\{\begin{array}{l}
0 \mbox{, if } prev(i) = -1\\
1+\LCP[\rmq(prev(i)+1,i)] \mbox{, otherwise}
\end{array}\right.
\]
\end{lemma}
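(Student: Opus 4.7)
The plan is to unfold both sides using the defining properties of $\LF$ and $prev$. Let $j = \LF[i]$ and $c = \BWT[i]$. From $\LF[i] = \InverseSUF[\SUF[i]-1]$ we obtain $\SUF[j] = \SUF[i] - 1$, so the suffix $\SF{j}$ equals $c$ concatenated with $\SF{i}$. Since $\LCP[j] = lcp(\SF{j-1}, \SF{j})$, the whole task reduces to identifying $\SF{j-1}$ and computing its longest common prefix with $c \cdot \SF{i}$. A case split on whether $prev(i) = -1$ then suggests itself, mirroring the statement.

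In the first case, $prev(i) = -1$ says $c$ does not occur anywhere in $\BWT[0..i-1]$, so $i$ is the first row of the suffix array whose last column equals $c$. By the last-to-first column correspondence, $j$ is then the first row whose first column equals $c$, whence $F[j-1] < c$. Thus $\SF{j-1}$ and $\SF{j}$ already differ in their very first character, so $\LCP[j] = 0$, as claimed. (The degenerate possibility $j = 0$ arises only when $\SUF[i] = 0$, i.e.\ at the sentinel row, and is covered by the fixed boundary $\LCP[0] = -1$; the lemma is to be read for $\LF[i] \geq 1$.)

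In the second case, set $p = prev(i)$. By definition of $prev$, both positions $p$ and $i$ carry character $c$ in $\BWT$ and no further occurrence of $c$ lies between them. The key structural fact about $\LF$---that if $\BWT[i]$ is the $k$-th occurrence of $c$ in $\BWT$ then $\LF[i]$ is the $k$-th occurrence of $c$ in $F$---then forces $\LF[p] = j - 1$. This single identity is the one delicate point of the argument: I would argue it cleanly by appealing to the order-preserving behaviour of $\LF$ on equal characters, together with the observation (supplied by the definition of $prev$) that no occurrence of $c$ sits strictly between $p$ and $i$. Once this is in hand, $\SF{j-1} = c \cdot \SF{p}$ and therefore
\[ lcp(\SF{j-1}, \SF{j}) = 1 + lcp(\SF{p}, \SF{i}). \]
Applying the identity $lcp(\SF{a}, \SF{b}) = \LCP[\rmq(a+1, b)]$ recalled in the Preliminaries with $a = p$ and $b = i$ rewrites the right-hand side as $1 + \LCP[\rmq(prev(i)+1, i)]$, completing the lemma.
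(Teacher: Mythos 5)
Your proposal is correct and follows essentially the same route as the paper: both cases hinge on the identity $\LF[prev(i)] = \LF[i]-1$ (which the paper asserts without further justification and you rightly flag as the one delicate step), followed by the same chain $lcp(\SF{\LF[i]-1},\SF{\LF[i]}) = 1 + lcp(\SF{prev(i)},\SF{i}) = 1 + \LCP[\rmq(prev(i)+1,i)]$. Your additional remarks---the explicit argument that $F[\LF[i]-1] < \BWT[i]$ in the first case, and the caveat about the degenerate row $\LF[i]=0$---are sound refinements rather than a different approach.
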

\begin{proof}
If $prev(i) = -1$, then $S_{\LF[i]}$ is the lexicographically smallest
suffix among all suffix having $\BWT[i]$ as first character. Hence 
$\LCP[\LF[i]] = 0$. Otherwise, $\LF[prev(i)] = \LF[i]-1$. 
In this case, it follows that
\begin{eqnarray*}
\LCP[\LF[i]] &=& lcp(S_{\SUF[\LF[i]-1]},S_{\SUF[\LF[i]]})
= lcp(S_{\SUF[\LF[prev(i)]]},S_{\SUF[\LF[i]]})\\
&=& 1 + lcp(S_{\SUF[prev(i)]},S_{\SUF[i]}) 
= 1+ \LCP[\rmq(prev(i)+1,i)]
\end{eqnarray*}
\end{proof}

\begin{algorithm}[t]
\caption{Construction of the $\LCP$-array.
\label{alg-first phase}}
\begin{tabbing}
\quad \=\quad \=\quad \=\quad \=\quad \=\quad \=\quad \=\quad \=\quad \=\quad \=\quad \=\quad \kill
01 \> \> $\lastocc[0..\sigma-1] \leftarrow [-1,-1,\ldots,-1]$ \\
02 \> \> $\LCP[0] \leftarrow -1$; $\LCP[n] \leftarrow -1$;
$\LCP[\LF[0]] \leftarrow 0$\\
03 \> \> \Keyw{for} $i\leftarrow 1$ \Keyw{to} $n-1$ \Keyw{do} \\
04 \> \> \>\Keyw{if} $\LCP[i] = \bot$ \Keyw{then} \ECOM{$\LCP[i]$ is undefined} \\
05 \> \> \> \>$\ell \leftarrow 0$\\
06 \> \> \> \>\Keyw{if} $\LF[i]<i$ \Keyw{then} \\   
07 \> \> \> \> \>$\ell \leftarrow \max\{\LCP[\LF[i]]-1,0\}$\\ 
08 \> \> \> \> \>\Keyw{if} $\BWT[i]=\BWT[i-1]$ \Keyw{then} \\   
09 \> \> \> \> \> \> continue at line 12\\   
10 \> \> \> \>\Keyw{while} $S[\SUF[i]+\ell]=S[\SUF[i-1]+\ell]$ \Keyw{do} \\
11 \> \> \> \> \> $\ell\leftarrow\ell+1$ \\
12 \> \> \> \> $\LCP[i] \leftarrow \ell$   \\
13 \> \> \> \Keyw{if} $\LF[i] > i$ \Keyw{then} \\
14 \> \> \> \> $\LCP[\LF[i]] \leftarrow \LCP[\rmq(\lastocc[\BWT[i]]+1,i)]+1$ \\
15 \> \> \> $\lastocc[\BWT[i]] \leftarrow i$ 
\end{tabbing}
\end{algorithm}

\begin{theorem}
Algorithm \ref{alg-first phase} correctly computes the $\LCP$-array.
\end{theorem}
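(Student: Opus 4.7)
My plan is to prove by induction on the loop index $i$ of the main \Keyw{for} loop that the following invariant holds at the end of iteration $i$: (i) $\LCP[j]$ is correctly set for every $j\in\{0,\ldots,i\}$; (ii) whenever $\LF[k]>k$ for some $k\le i$, the forward entry $\LCP[\LF[k]]$ is also correctly set; and (iii) $\lastocc[c] = \max\{j\le i \mid \BWT[j]=c\}$, with $-1$ when the set is empty. Item (iii) is the bookkeeping fact that makes Lemma \ref{lem-LCP[LF[i]]} applicable inside the loop, since it guarantees that the value of $\lastocc[\BWT[i]]$ read during iteration $i$ (before line 15 updates it) is exactly $prev(i)$. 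The base case follows from lines 01--02; the only nontrivial part is $\LCP[\LF[0]]\leftarrow 0$, which is the first case of Lemma \ref{lem-LCP[LF[i]]} applied at $i=0$.

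For the inductive step I would handle the two conditional blocks separately. The assignment on line 14 (reached when $\LF[i]>i$) is $\LCP[\LF[i]]\leftarrow\LCP[\rmq(\lastocc[\BWT[i]]+1,i)]+1$; by invariants (i) and (iii) every $\LCP$ entry in the \rmq range is already correct. When $prev(i)\ne -1$, this matches the second case of Lemma \ref{lem-LCP[LF[i]]} verbatim; when $prev(i)=-1$, the \rmq range is $[0..i]$ and the sentinel $\LCP[0]=-1$ is the minimum, so the formula evaluates to $-1+1=0$, matching the first case. Line 15 then restores invariant (iii) for the next iteration.

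The heart of the argument is that $\LCP[i]$ itself is correctly filled by lines 04--12. If $\LCP[i]\ne\bot$ on entry, it was written by line 14 in some earlier iteration $k<i$ with $\LF[k]=i$, so induction gives correctness and lines 05--12 are properly skipped. Otherwise the algorithm sets $\ell=\max\{\LCP[\LF[i]]-1,0\}$ when $\LF[i]<i$ and $\ell=0$ otherwise. In both cases $\ell\le\LCP[i]$: either $prev(i)=-1$ and $\ell=0$, or Lemma \ref{lem-LCP[LF[i]]} combined with the containment $i\in[prev(i)+1..i]$ gives $\LCP[\LF[i]]-1=\LCP[\rmq(prev(i)+1,i)]\le\LCP[i]$. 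Inside the $\LF[i]<i$ branch, when additionally $\BWT[i]=\BWT[i-1]$, one has $prev(i)=i-1$, so the \rmq window collapses to $\{i\}$ and the inequality becomes an equality $\LCP[i]=\LCP[\LF[i]]-1=\ell$; this justifies the shortcut on line 09. In every other configuration the \Keyw{while} loop on lines 10--11 extends the certified common prefix of length $\ell$ by naive character comparisons and halts at the first true mismatch, which is $\LCP[i]$ by definition.

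The main obstacle I anticipate is not mathematical but rather pinning down precisely which entries of $\LCP$ are safe to read at each moment: line 14 may fill entries strictly ahead of the current $i$, line 07 reads an entry strictly behind $i$, and $\LCP[i]$ itself may already be present from an earlier execution of line 14. Once invariants (i)--(iii) are fixed, every individual line of Algorithm \ref{alg-first phase} reduces to a direct instance of Lemma \ref{lem-LCP[LF[i]]} or to the elementary fact that any \rmq-minimum taken over a window containing $i$ is at most $\LCP[i]$.
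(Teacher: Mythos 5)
Your proposal is correct and follows essentially the same route as the paper's own proof: induction over the for-loop, with Lemma \ref{lem-LCP[LF[i]]} justifying both the lower bound $\ell\le\LCP[i]$ (and its collapse to equality when $\BWT[i]=\BWT[i-1]$, i.e.\ $prev(i)=i-1$) and the forward assignment in line 14 via $\lastocc[\BWT[i]]=prev(i)$. Your only additions are to state the bookkeeping invariants explicitly and to note that line 14 also handles $prev(i)=-1$ because the sentinel $\LCP[0]=-1$ makes the formula evaluate to $0$; these are harmless refinements of the same argument.
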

\begin{proof}
Under the assumption that all entries in the LCP-array in the first
$i-1$ iterations of the for-loop have been computed
correctly, we consider the $i$-th iteration and prove:
\begin{enumerate}
\item If $\LCP[i] = \bot$, then the entry $\LCP[i]$ will be computed correctly.
\item If $\LF[i] > i$, then the entry $\LCP[\LF[i]]$ will be computed correctly.
\end{enumerate}

(1) If the if-condition in line 6 is not true, then
$S_{\SUF[i-1]}$ and $S_{\SUF[i]}$ are compared character by character 
(lines 10-11) and $\LCP[i]$ is assigned the correct value in line 12.
Otherwise, if the if-condition in line 6 is true, then
$\ell$ is set to $\max\{\LCP[\LF[i]]-1,0\}$. We claim that
$\ell \leq \LCP[i]$. This is certainly true if $\ell=0$, so suppose that
$\ell = \LCP[\LF[i]]-1 > 0$. According to (the proof of) 
Lemma \ref{lem-LCP[LF[i]]},
$\LCP[\LF[i]] - 1 = lcp(S_{\SUF[prev(i)]},S_{\SUF[i]})$. Obviously,
$lcp(S_{\SUF[prev(i)]},S_{\SUF[i]}) \leq lcp(S_{\SUF[i-1]},S_{\SUF[i]})$,
so the claim follows. 

Now, if $\BWT[i]\neq \BWT[i-1]$, then $S_{\SUF[i-1]}$ and $S_{\SUF[i]}$ 
are compared character by character (lines 10-11),
but the first $\ell$ characters are skipped because they are identical.
Again, $\LCP[i]$ is assigned the correct value in line 12.
Finally, if $\BWT[i] = \BWT[i-1]$, then $prev(i) = i-1$.
This, in conjunction with Lemma \ref{lem-LCP[LF[i]]}, yields
$\LCP[\LF[i]] - 1 = lcp(S_{\SUF[prev(i)]},S_{\SUF[i]})
= lcp(S_{\SUF[i-1]},S_{\SUF[i]}) = \LCP[i]$. Thus, $\ell = \LCP[\LF[i]] - 1$
is already the correct value of $\LCP[i]$. So lines 10-11
can be skipped and the assignment in line 12 is correct.

(2) In the linear scan of the $\LCP$-array, we always have
$\lastocc[\BWT[i]] = prev(i)$. Therefore, it is a direct consequence of
Lemma \ref{lem-LCP[LF[i]]} that the assignment in line 14 is correct.
\end{proof}

We still have to explain how the index $j = \rmq(\lastocc[\BWT[i]]+1,i))$ 
and the lcp-value $\LCP[j]$ in line 14 can be computed efficiently.
To this end, we use a stack $K$ of size $\Order{\sigma}$.
Each element on the stack is a pair consisting of an index and an 
lcp-value. We first push $(0,-1)$ onto the initially 
empty stack $K$. It is an invariant of the for-loop that the stack elements
are strictly increasing in both components (from bottom to top).
In the $i$th iteration of the for-loop, 
before line $13$, we update the stack $K$ by removing all elements 
whose lcp-value is greater than or equal to $\LCP[i]$.
Then, we push the pair $(i, \LCP[i])$ onto $K$. Clearly,
this maintains the invariant. Let $x = \lastocc[\BWT[i]]+1$.
The answer to $\rmq(x,i)$ is the pair $(j,\ell)$ where $j$ is the minimum
of all indices that are greater than or equal to $x$.
This pair can be found by an inspection of the stack.
Moreover, the lcp-value $\LCP[x]+1$ we are looking for is $\ell+1$. 
To meet the $\Order{\sigma}$ space condition of the stack,
we check after each $\sigma$th update if the
size $s$ of $K$ is greater than $\sigma$. If so, 
we can remove $s-\sigma$ elements from $K$
because there are at most $\sigma$ possible queries.
With this strategy, the stack size never exceeds $2\sigma$
and the amortized time for the updates is $\Order{n}$.
Furthermore, an inspection of the stack takes $\Order{\sigma}$ time.
In practice, this works particularly well when there is a run in the $\BWT$ 
because then the element we are searching for is on top of the stack.

Algorithm \ref{alg-first phase} has a quadratic run time in the worst case,
consider e.g.\ the string $S=ababab...ab\SNT$.

At first glance, Algorithm \ref{alg-first phase} does not have any
advantage over Kasai et al.'s algorithm because it holds
$S$, $\SUF$, $\LF$, $\BWT$, and $\LCP$ in main memory.
A closer look, however, 
reveals that the arrays $\SUF$, $\LF$, and $\BWT$ are accessed sequentially 
in the for-loop. So they can be streamed from disk. We cannot avoid
the random access to $S$, but that to $\LCP$ as we shall show next.

Most problematic are the ``jumps'' upwards (line 7 when $\LF[i]<i$) 
and  downwards (line 14 when $\LF[i]>i$).
The key idea is to buffer lcp-values in queues (FIFO data structures)
and to retrieve them when needed.

First, one can show that the condition $\LCP[i]=\bot$ in line 4
is equivalent to 
$i\geq C[F[i]]+occ(\BWT[i],i)$. The last condition can be evaluated 
in constant time and space (increment a counter $cnt(\BWT[i])$
in iteration $i$), so it can replace $\LCP[i]=\bot$ in line 4. This is 
important because in case $j=\LF[i]>i$, the value $\LCP[j]$ is still
in one of the queues and has \emph{not} yet been written to the $\LCP$-array.
In other words, when we reach index $j$, we still have $\LCP[j]=\bot$ although
$\LCP[j]$ has already been computed. Thus,
by the test $i\geq C[F[j]]+occ(\BWT[j],i)$ we can decide 
whether $\LCP[j]$ has already been computed or not.

Second, $\LF[i]$ lies in between
$C[\BWT[i]]$ and $C[\BWT[i]]+occ(\BWT[i],n-1)$, the interval of all
suffixes that start with character $\BWT[i]$. Note that there are 
at most $\sigma$ different such intervals.
We exploit this fact in the following way.
For each character $c\in \Sigma$ we use a queue $Q_c$. 
During the for-loop we add (enqueue) the values 
$\LCP[C[c]], \LCP[C[c]+1],\ldots, \LCP[C[c]+occ(c,n-1)]$ in exactly
this order to $Q_c$. In iteration $i$, an operation $enqueue(Q_c,x)$ 
is done for $c=\BWT[i]$ and $x=\LCP[\rmq(\lastocc[\BWT[i]]+1,i)]+1$
in line $14$ provided that $\LF[i]>i$, and in line $12$
for $c=F[i]$ and $x=\ell$. 
Also in iteration $i$, an operation $dequeue(Q_c)$ is done for $c=\BWT[i]$ 
in line 7 provided that $\LF[i]<i$. This dequeue operation returns
the value $\LCP[\LF[i]]$ which is needed in line 7.
Moreover, if $i < C[F[i]]+occ(\BWT[i],i)$,
then we know that $\LCP[i]$ has been computed previously but is still
in one of the queues. Thus, an operation $dequeue(Q_c)$ is done for
$c=F[i]$ immediately before line $13$, and it returns the value $\LCP[i]$.

The space used by the algorithm now only depends on the size of the 
queues. We use constant size buffers for the queues and 
read/write the elements to/from disk if the bufferes
are full/empty (this even allows to answer an $\rmq$ by binary search in
$\Order{\log(\sigma})$ time). Therefore, only the text $S$ remains in main
memory and we obtain an $n$ bytes semi-external algorithm. 

\section{Improved algorithm}

Our experiments showed that even a careful engineered version of
Algorithm \ref{alg-first phase} does not always beat the currently fastest 
LACA \cite{KAR:MAN:PUG:2009}.
For this reason, we will now present another algorithm that
uses a modification of Algorithm \ref{alg-first phase} in its first phase.
This modified version computes each $\LCP$-entry whose value is smaller 
than or equal to $m$, where $m$ is a user-defined value.
(All we know about the other entries is that they are greater than $m$.)
It can be obtained from Algorithm \ref{alg-first phase}
by modifying lines $8$, $10$, and $14$ as follows:
\begin{itemize}
\item[$08$] \Keyw{if} $\BWT[i]=\BWT[i-1]$ \Keyw{and} $\ell<m$ \Keyw{then}
\item[$10$] \Keyw{while} $S[\SUF[i]+\ell]=S[\SUF[i-1]+\ell]$ \Keyw{and}
$\ell<m+1$ \Keyw{do} 
\item[$14$] $\LCP[\LF[i]] \leftarrow \min 
\{\LCP[\rmq(\lastocc[\BWT[i]]+1,i)]+1,m+1\}$
\end{itemize}
In practice, $m=254$ is a good choice because $\LCP$-values
greater than $m$ can be marked by the value $255$ and each $\LCP$-entry 
occupies only one byte. Because the string $S$ must also be kept in
main memory, this results in a total space consumption of $2n$ bytes.

Let $I=[i \mid 0\leq i < n \mbox{ and } \LCP[i]\geq m]$ be an array 
containing the indices at which the values in the $\LCP$-array
are $\geq m$ after phase 1. In the second phase we have to calculate the 
remaining $\nn = |I|$ many $\LCP$-entries, and we use
Algorithm \ref{alg-second phase} for this task. In essence,
this algorithm is a combination of two algorithms
presented in \cite{KAR:MAN:PUG:2009} that compute the $\PLCP$-array: 
(a) the linear time $\Phi$-algorithm and (b) the $O(n\log n)$ time 
algorithm based on the concept of irreducible lcp-values.
Let us recapitulate necessary definitions from \cite{KAR:MAN:PUG:2009}.

\begin{definition}
For all $i$ with $1\leq i \leq n-1$ let $\Phi[\SUF[i]] = \SUF[i-1]$, and
for all $j$ with $0\leq j \leq n-1$ let $\PLCP[j] = lcp(S_j,S_{\Phi[j]})$.
An entry $\PLCP[j]$, where $j>0$, is called \emph{reducible} if 
$S[j-1] = S[\Phi[j]-1]$; otherwise it is \emph{irreducible}. 
\end{definition}

Note that $\PLCP[\SUF[i]]$ is reducible if and only if $\BWT[i] = \BWT[i-1]$.
This is because $\BWT[i] = S[\SUF[i]-1]$ and
$\BWT[i-1] = S[\SUF[i-1]-1] = S[\Phi[\SUF[i]]-1]$.

\begin{lemma}
\label{lem-PLCP entries}
For all $j$ with $0\leq j \leq n-1$, we have $\PLCP[j] \geq \PLCP[j-1] -1$.
Moreover, if $\PLCP[j]$ is reducible, then $\PLCP[j] = \PLCP[j-1] -1$.
\end{lemma}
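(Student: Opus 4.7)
The plan is to establish the inequality for $j \geq 1$ (the case $j=0$ is vacuous under the natural convention $\PLCP[-1]=0$) by a Kasai-style shift argument, and then to sharpen it to an equality under the reducibility hypothesis by prepending a common character in the opposite direction.

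For the inequality, set $k := \PLCP[j-1]$. If $k \le 0$ there is nothing to show, so assume $k \ge 1$. By definition of $\Phi$ and $\PLCP$, the suffixes $S_{j-1}$ and $S_{\Phi[j-1]}$ agree on their first $k$ characters; stripping the common leading character yields that $S_j$ and $S_{\Phi[j-1]+1}$ agree on their first $k-1$ characters. I would then show that $S_{\Phi[j-1]+1} <_{\mathrm{lex}} S_j$: because $\Phi[j-1]$ is the SA-predecessor of $j-1$, we have $S_{\Phi[j-1]} <_{\mathrm{lex}} S_{j-1}$, and since their first $k$ characters coincide the first mismatch satisfies $S[\Phi[j-1]+k] < S[j-1+k]$, which is exactly the mismatch at position $k-1$ of the two shifted suffixes. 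Invoking the standard fact that $\PLCP[j] = \LCP[\InverseSUF[j]]$ equals the maximum of $lcp(S_j,S_i)$ over all $i$ with $S_i <_{\mathrm{lex}} S_j$, I conclude $\PLCP[j] \ge k-1$.

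For the second claim, reducibility gives $S[j-1] = S[\Phi[j]-1]$. Since $S_j$ and $S_{\Phi[j]}$ share their first $\PLCP[j]$ characters, prepending this common character shows that $S_{j-1}$ and $S_{\Phi[j]-1}$ share their first $\PLCP[j]+1$ characters. The analogous lexicographic check goes through: the first mismatch between $S_{j-1}$ and $S_{\Phi[j]-1}$ occurs exactly at the first mismatch between $S_j$ and $S_{\Phi[j]}$ (shifted by one position), which is decided in favor of $S_{\Phi[j]}$ by the definition of $\Phi[j]$. Hence $S_{\Phi[j]-1}$ is a lex-smaller suffix of $S_{j-1}$ with lcp $\ge \PLCP[j]+1$, so using the same maximality property, $\PLCP[j-1] \ge \PLCP[j]+1$. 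Combined with the first part, this forces equality.

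The only delicate step is the lexicographic check: one must confirm that the shifted candidate ($S_{\Phi[j-1]+1}$ in one direction, $S_{\Phi[j]-1}$ in the other) genuinely sits to the left of the reference suffix in the suffix array, so that it is a legitimate witness for $\PLCP$. Once that is in place, the two claims drop out directly from the definitions, so I expect no further obstacles.
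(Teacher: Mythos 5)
Your proof is correct; note that the paper itself gives no argument for this lemma but simply defers to the cited references (Kasai et al., Manzini, K\"arkk\"ainen et al.), and your Kasai-style shift argument is precisely the standard proof found there. The two key points you identify are exactly the ones that matter: the sentinel guarantees every lcp of distinct suffixes is realized at a mismatch, so the shifted candidate ($S_{\Phi[j-1]+1}$ for the inequality, $S_{\Phi[j]-1}$ for the reducible case) inherits the correct lexicographic orientation, and the fact that the immediate predecessor $\Phi[\cdot]$ maximizes $lcp$ over all lexicographically smaller suffixes turns each witness into the required bound on $\PLCP$.
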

\begin{proof}
See \cite{KAS:2001,MAN:2004,KAR:MAN:PUG:2009}.
\end{proof}

The preceding lemma has the following two consequences:
\begin{itemize}
\item If we compute
an entry $\PLCP[j]$ (where $j$ varies from $1$ to $n-1$), then $\PLCP[j-1]$
many character comparisons can be skipped. This is the reason for the
linear run time of Algorithm \ref{alg-second phase}; cf.\ 
\cite{KAS:2001,KAR:MAN:PUG:2009}.
\item If we know that
$\PLCP[j]$ is reducible, then no further character comparison is needed to
determine its value. At first glance this seems to be unimportant because
the next character comparison will yield a mismatch anyway. At second glance,
however, it turns out to be important because the character comparison may
result in a chache miss! (Note that in contrast to the $O(n\log n)$ time 
algorithm in \cite{KAR:MAN:PUG:2009}, the $\Phi$-algorithm 
does not make use of this property.)
\end{itemize}

\begin{algorithm}
\caption{Phase 2 of the construction of the $\LCP$-array. 
(In practice $\SUF[n\!-\!1]$ can be used for the undefined value $\bot$ because
the entries in the $\Phi$-array are of the form $\SUF[i-1]$, i.e., 
$\SUF[n\!-\!1]$ does not occur in the $\Phi$-array.)
\label{alg-second phase}}
\begin{tabbing}
\quad \=\quad \=\quad \=\quad \=\quad \=\quad \=\quad \=\quad \=\quad \=\quad \=\quad \=\quad \kill
01 \> \> $b[0..n-1] \leftarrow [0,0,\ldots,0]$ \\
02 \> \> \Keyw{for} $i\leftarrow 0$ \Keyw{to} $n-1$ \Keyw{do} \\
03 \> \> \> \Keyw{if} $\LCP[i] > m$ \Keyw{then}  \\
04 \> \> \> \>$b[\SUF[i]] \leftarrow 1$ \ECOM{the $b$-array can be computed in
  phase 1 already}\\
05 \> \> $\Phi[0..\nn-1] \leftarrow [\bot,\bot,\ldots,\bot]$ \\    
06 \> \> initialize a rank data structure for $b$ \\
07 \> \> \Keyw{for} $i\leftarrow 0$ \Keyw{to} $n-1$ \Keyw{do}\ECOM{stream 
$\SUF$, $\LCP$, and $\BWT$ from disk} \\
08 \> \> \> \> \Keyw{if} $\LCP[i]>m$ \Keyw{and} $\BWT[i]\not=\BWT[i\!-\!1]$
\Keyw{then} \ECOM{$\PLCP[\SUF[i]]$ is irreducible}\\ 
09 \> \> \> \> \> $\Phi[rank_1(\SUF[i])] \leftarrow \SUF[i-1]$  \\
\\
10 \> \> $j_I \leftarrow 0$ \\
11 \> \> $\ell \leftarrow m+1$ \\
12 \> \> $\PLCP[0..\nn-1] \leftarrow [0,0,\ldots,0]$ \\ 
13 \> \> \Keyw{for} $j\leftarrow 0$ \Keyw{to} $n-1$ \Keyw{do} 
\ECOM{left-to-right scan of $b$ and $S$, but random access to $S$}\\
14 \> \> \> \Keyw{if} $b[j] = 1$ \Keyw{then} \\
15 \> \> \> \> \Keyw{if} $j\not=0$ \Keyw{and} $b[j\!-\!1] = 1$ \Keyw{then} \\
16 \> \> \> \> \> $\ell \leftarrow \ell-1$ \ECOM{at least $\ell-1$ characters 
match by Lemma \ref{lem-PLCP entries}}\\
17 \> \> \> \> \Keyw{else} \\
18 \> \> \> \> \> $\ell \leftarrow m+1$ \ECOM{at least $m+1$ characters 
match by phase 1}\\
\\
19 \> \> \> \> \Keyw{if} $\Phi[j_I] \not= \bot$ \Keyw{then} \ECOM{$\PLCP[j_I]$
is irreducible}\\
20 \> \> \> \> \> \Keyw{while} $S[j+\ell] = S[\Phi[j_I]+\ell]$ \Keyw{do} \\
21 \> \> \> \> \> \> $\ell \leftarrow \ell+1$ \\
22 \> \> \> \> $\PLCP[j_I] \leftarrow \ell$ \ECOM{if $\PLCP[j_I]$ is reducible, 
no character comparison was needed}\\
23 \> \> \> \> $j_I \leftarrow j_I+1$ \\
\\
24 \> \> \Keyw{for} $i\leftarrow 0$ \Keyw{to} $n-1$ \Keyw{do} \ECOM{stream 
$\SUF$ and $\LCP$ from disk}  \\
25 \> \> \> \Keyw{if} $\LCP[i] > m$ \Keyw{then}  \\
26 \> \> \> \> $\LCP[i] \leftarrow \PLCP[rank_1(\SUF[i])]$ 
\end{tabbing}
\end{algorithm}

Algorithm \ref{alg-second phase} uses a bit array $b$, where $b[\SUF[i]] = 0$
if $\LCP[i]$ is known already 
(i.e., $b[j] = 0$ if $\PLCP[j]$ is known)
and $b[\SUF[i]] = 1$ if $\LCP[i]$ still must be computed 
(i.e., $b[j] = 1$ if $\PLCP[j]$ is unknown); see lines 1--4 of the algorithm.
In contrast to the $\Phi$-algorithm \cite{KAR:MAN:PUG:2009}, our algorithm
does not compute the whole $\Phi$-array ($\PLCP$-array, respectively) but only 
the $\nn$ many entries for which the lcp-value is still unknown (line 5). 
So if we would delete the values $\Phi[j]$ ($\PLCP[j]$, respectively) for 
which $b[j] = 0$ from the original $\Phi$-array ($\PLCP$-array, respectively) 
\cite{KAR:MAN:PUG:2009}, we would obtain our array $\Phi[0..\nn-1]$ 
($\PLCP[0..\nn-1]$, respectively).
We achieve a direct computation of $\Phi[0..\nn-1]$
with the help of a rank data structure for the bit array $b$ 
such that rank queries $rank_1(j)$ can be answered in constant time, where
$rank_1(j)$ returns the number of $1$'s up to position $j$ in $b$.
The for-loop in lines 7--9 fills our array $\Phi[0..\nn-1]$ but again there
is a difference to the original $\Phi$-array: reducible values are omitted!
After initialization of the counter $j_I$, the number $\ell$ (of characters
that can be skippped), and the $\PLCP$ array, 
the for-loop in lines 13--23 fills the array $\PLCP[0..\nn-1]$ by
scanning the $b$-array and the string $S$ from left to right. In line 14, 
the algorithm tests whether the lcp-value is still unknown 
(this is the case if $b[j] = 1$).
If so, it determines the number of characters that can be skippped
in lines 15--18. If $\PLCP[j_I]$ is irreducible (equivalently, 
$\Phi[j_I] \not= \bot$) then its correct values is computed by character
comparisons in lines 20--21. Otherwise, $\PLCP[j_I]$ is reducible and
$\PLCP[j_I]= \PLCP[j_I-1]-1$ by Lemma \ref{lem-PLCP entries}. In both cases
$\PLCP[j_I]$ is assigned the correct value in line 22. Finally, 
the missing entries in the $\LCP$-array (lcp-values in suffix array order)
are filled with the help of $\PLCP$-array (lcp-values in text order) in
lines 24--26.

Clearly, the first phase of our algorithm has a linear worst-case time
complexity. The same is true of the second phase as explained above.
Thus, the whole algorithm has a linear run-time. 

\bibliographystyle{plain}
\bibliography{diss_gog}

\begin{thebibliography}{10}

\bibitem{BUR:WHE:1994}
Michael Burrows and David~J. Wheeler.
\newblock A block-sorting lossless data compression algorithm.
\newblock Technical report, Digital Equipment Corporation, 1994.

\bibitem{GUS:1997}
Dan Gusfield.
\newblock {\em Algorithms on Strings, Trees, and Sequences - Computer Science
  and Computational Biology}.
\newblock Cambridge University Press, 1997.

\bibitem{KAR:MAN:PUG:2009}
Juha K{\"a}rkk{\"a}inen, Giovanni Manzini, and Simon~J. Puglisi.
\newblock Permuted longest-common-prefix array.
\newblock In Gregory Kucherov and Esko Ukkonen, editors, {\em CPM}, volume 5577
  of {\em Lecture Notes in Computer Science}, pages 181--192. Springer, 2009.

\bibitem{KAR:SAN:2003}
Juha K{\"a}rkk{\"a}inen and Peter Sanders.
\newblock Simple linear work suffix array construction.
\newblock In Jos C.~M. Baeten, Jan~Karel Lenstra, Joachim Parrow, and
  Gerhard~J. Woeginger, editors, {\em ICALP}, volume 2719 of {\em Lecture Notes
  in Computer Science}, pages 943--955. Springer, 2003.

\bibitem{KAS:2001}
Toru Kasai, Gunho Lee, Hiroki Arimura, Setsuo Arikawa, and Kunsoo Park.
\newblock Linear-time longest-common-prefix computation in suffix arrays and
  its applications.
\newblock In Amihood Amir and Gad~M. Landau, editors, {\em CPM}, volume 2089 of
  {\em Lecture Notes in Computer Science}, pages 181--192. Springer, 2001.

\bibitem{MAN:MYE:1993}
Udi Manber and Eugene~W. Myers.
\newblock Suffix arrays: A new method for on-line string searches.
\newblock {\em SIAM Journal on Computing}, 22(5):935--948, 1993.

\bibitem{MAN:2004}
Giovanni Manzini.
\newblock Two space saving tricks for linear time lcp array computation.
\newblock In Torben Hagerup and Jyrki Katajainen, editors, {\em SWAT}, volume
  3111 of {\em Lecture Notes in Computer Science}, pages 372--383. Springer,
  2004.

\bibitem{NAV:MAK:2007}
Gonzalo Navarro and Veli M{\"a}kinen.
\newblock Compressed full-text indexes.
\newblock {\em ACM Comput. Surv.}, 39(1), 2007.

\bibitem{NON:ZHA:CHA:2009}
Ge~Nong, Sen Zhang, and Wai~Hong Chan.
\newblock Linear suffix array construction by almost pure induced-sorting.
\newblock In James~A. Storer and Michael~W. Marcellin, editors, {\em DCC},
  pages 193--202. IEEE Computer Society, 2009.

\bibitem{PUG:SMY:TUR:2007}
Simon~J. Puglisi, William~F. Smyth, and Andrew Turpin.
\newblock A taxonomy of suffix array construction algorithms.
\newblock {\em ACM Comput. Surv.}, 39(2), 2007.

\bibitem{PUG:TUR:2008}
Simon~J. Puglisi and Andrew Turpin.
\newblock Space-time tradeoffs for longest-common-prefix array computation.
\newblock In Seok-Hee Hong, Hiroshi Nagamochi, and Takuro Fukunaga, editors,
  {\em ISAAC}, volume 5369 of {\em Lecture Notes in Computer Science}, pages
  124--135. Springer, 2008.

\bibitem{SAD:2007}
Kunihiko Sadakane.
\newblock Compressed suffix trees with full functionality.
\newblock {\em Theory Comput. Syst.}, 41(4):589--607, 2007.

\end{thebibliography}

\Comment{
\section{Experimental results}
\label{sec-Experimental results}

\newcommand{\lcpKasai}{{\sf KLAAP}}
\newcommand{\lcpgo}{{\sf GO}}
\newcommand{\lcpgoII}{{\sf GO2}}
\newcommand{\lcpphi}{\ensuremath{\Phi}{\sf-KMP}}

\newcommand{\simpleI}{{\sf simple1}}
\newcommand{\simpleII}{{\sf simple2}}

Besides Algorithms \ref{alg-first phase} and \ref{alg-second phase},
we have implemented and compared four LACAs:
\begin{itemize}
\item[\lcpKasai:] A semi-external version of the algorithm presented in 
\cite{KAS:2001} using $5n$ bytes. Pseudo code is presented in the Appendix. 
\item[\lcpphi:] The semi-external version of the algorithm presented in 
\cite{KAR:MAN:PUG:2009} using $5n$ bytes. Pseudo code is presented in the 
Appendix. 
\item[\lcpgo:]	The hybrid algorithm presented in 
Section \ref{sec-The hybrid algorithm} using $2n$ bytes of memory.
\item[\lcpgoII:] The hybrid algorithm using buffered queues in the first phase 
and therefore usually using only $n$ bytes of memory (recall that---for space 
reasons---the algorithm is not explained in this paper).
\end{itemize}
The implementations are part of the first author's {\it sdsl}-library, 
which is available under the GPL licence at \url{http://goo.gl/FAEU}. 
This site also provides the complete set of test results. 
We used the {\it Pizza\&Chili} Corpus as test set. 
The tests were performed on a PC equipped with a Dual-Core AMD Opteron 1222 
processor and $4$GB of main memory. Tables \ref{tab-Running times} and 
\ref{tab-Memory consumption} show the running times (we meassured real time) 
and peak memory consumption for different text categories, text sizes, and 
algorithms. Of course, it would be interesting to know how the LACAs perform
relative to the other phases in the CST construction; our experimental
results in the Appendix answer this question.
Although the hybrid algorithm \lcpgo\
performs many more character comparisons than \lcpKasai\ or \lcpphi,
it clearly beats these algorithms. For example,
\lcpKasai\ or \lcpphi\ take  $5.2\cdot 10^{7}$ character comparisons 
for the input {\tt dblp.xml.50MB} while \lcpgo\ takes $6.4\cdot 10^9$. 
By contrast, \lcpgo\ requires only $1.8\cdot 10^8$ random acesses to the 
text, while the other algorithms need about $5.2\cdot 10^8$. 
The cost of random accesses increases with file size. 
While \lcpgo\ is two times faster than \lcpKasai\ for $20$MB of
DNA sequence, it is three time faster for $200$MB. 
Finally, let us focus on the two phases of \lcpgo. 
The first phase usually takes most space and time.
The parameter $\nn$ in the second phase lies in the range from $0.0001n$ 
(XML) to $0.24n$ (English text with many repetitions).
For small $\nn$ ($\nn<0.03n$), Algorithm \ref{alg-first phase} beats 
\lcpgo\ in time, while it is totally inappropriate for large $\nn$: 
it takes $60$ times longer for {\tt English.50MB} than \lcpgo. 
Algorithm \ref{alg-second phase}
never beats another algorithm in time or space.

\begin{table}
\begin{center}
\begin{tabular}{r*{4}{|*{4}{|r}}| }
 & \multicolumn{4}{|c}{20MB} &\multicolumn{4}{|c}{50MB} 
 & \multicolumn{4}{|c}{100MB} &\multicolumn{4}{|c}{200MB} 
 \\ \cline{2-17}
 &
            \begin{sideways} \lcpKasai \end{sideways} & \begin{sideways} \lcpphi \end{sideways}  & \begin{sideways} \lcpgo \end{sideways} & \begin{sideways} \lcpgoII \end{sideways} &
            \begin{sideways} \lcpKasai \end{sideways} & \begin{sideways} \lcpphi \end{sideways}  & \begin{sideways} \lcpgo \end{sideways} & \begin{sideways} \lcpgoII \end{sideways} &
            \begin{sideways} \lcpKasai \end{sideways} & \begin{sideways} \lcpphi \end{sideways}  & \begin{sideways} \lcpgo \end{sideways} & \begin{sideways} \lcpgoII \end{sideways} &
            \begin{sideways} \lcpKasai \end{sideways} & \begin{sideways} \lcpphi \end{sideways}  & \begin{sideways} \lcpgo \end{sideways} & \begin{sideways} \lcpgoII \end{sideways} 
\\ \hline
dna&6.5&6.7&3&4.3&18.9&17.6&7.9&13.1&46.4&38.1&16.1&23.9&111.4&89.4&35.7&51.7\\\hline 
English&5.9&4.8&7.3&9.9&16.2&13.7&15.5&20.5&37.6&35.2&29&36.5&101.4&76.4&63.7&81.1\\\hline 
dblp.xml&4.9&4.4&3.4&5.9&13.7&14.3&7.7&13.3&29.2&25.9&15.6&24.1&78.8&64.4&34.5&51.6\\\hline 
sources&4.5&4.2&3.1&6.7&13.6&11.6&9.2&14.1&28.2&26.8&18.2&29.7&77.9&53.6&41.8&62\\\hline 
proteins&5.4&4.7&5.1&6.4&19.7&14.7&11.7&15.8&40.2&36.2&24.5&33.1&100.1&75.7&60.8&76.5\\\hline 
\end{tabular}
\end{center}
\caption{Running times in seconds of the LACAs for test cases of the 
{\it Pizza\&Chili} Corpus. The first column contains the text category and 
the first row the size of the text.\label{tab-Running times}}
\end{table}

\begin{table}
\begin{center}
\begin{tabular}{r*{4}{|*{4}{|r}}| }
 & \multicolumn{4}{|c}{20MB} &\multicolumn{4}{|c}{50MB} 
 & \multicolumn{4}{|c}{100MB} &\multicolumn{4}{|c}{200MB} 
 
 \\ \cline{2-17}
 &
            \begin{sideways} \lcpKasai \end{sideways} & \begin{sideways} \lcpphi \end{sideways}  & \begin{sideways} \lcpgo \end{sideways} & \begin{sideways} \lcpgoII \end{sideways} &
            \begin{sideways} \lcpKasai \end{sideways} & \begin{sideways} \lcpphi \end{sideways}  & \begin{sideways} \lcpgo \end{sideways} & \begin{sideways} \lcpgoII \end{sideways} &
            \begin{sideways} \lcpKasai \end{sideways} & \begin{sideways} \lcpphi \end{sideways}  & \begin{sideways} \lcpgo \end{sideways} & \begin{sideways} \lcpgoII \end{sideways} &
            \begin{sideways} \lcpKasai \end{sideways} & \begin{sideways} \lcpphi \end{sideways}  & \begin{sideways} \lcpgo \end{sideways} & \begin{sideways} \lcpgoII \end{sideways} 
\\ \hline
dna&5.1&5.1&2.3&2&4.7&4.7&3.1&1.4&4.8&4.8&2.1&1.4&4.9&4.9&2.2&1.4\\\hline 
English&5.7&5.7&2.7&2.5&5.6&5.6&3.4&2.4&5&5&2.6&1.7&4.9&5.1&2.4&1.4\\\hline 
dblp.xml&5.5&5.5&2.7&2.4&5.6&5.6&3.3&2.3&5&5&2.6&1.6&4.8&5&2.5&1.3\\\hline 
sources&5.6&5.6&2.9&2.5&5.7&5.7&3.4&2.4&5&5&2.7&1.7&4.9&4.9&2.6&1.4\\\hline 
proteins&5.4&5.4&2.4&2.4&4.8&4.8&3.2&1.5&4.8&4.8&2.5&1.5&4.5&4.5&2.2&1.1\\\hline 
\end{tabular}
\end{center}
\caption{Memory in bytes per input character of the LACAs for test cases 
of the {\it Pizza\&Chili} Corpus. The first column contains the text category 
and the first row the size of the text.\label{tab-Memory consumption}}
\end{table}

\newpage

\newpage
\section*{Appendix}

\begin{figure}
	\begin{center}
	\scalebox{0.8}{
	}
	\end{center}
\caption{Four memory profiles of CST constructions for the input
{\tt dna.200MB}. The construction algorithms solely differ in the LACA
they use (from left to right): \lcpKasai, \lcpphi, \lcpgo, and \lcpgoII. 
Each profile shows the time spent on different tasks in the CST construction.
We will explain the leftmost profile in detail: The construction of the 
suffix array (sa) takes $62.1$ seconds. After that the Burrows and Wheeler 
Transform (bwt) is computed in $32.4$ seconds. The construction of the 
Wavelet Tree (wt) takes another $9.4$ seconds. Adding some samples of the 
suffix array yields the compressed suffix array after $115.2$ seconds.
The \lcpKasai\ algorithm first calculates the inverse suffix array (isa) in
$35$ seconds and then the $\LCP$-array (lcp), requiring $111.4$ seconds in 
total.
Finally, the topology of the suffix tree, which is represented by a sequence 
of balanced parentheses (bps), is calculated in $9.6$ seconds. 
The two right profiles impressivly show how space efficient 
the two new algorithms are. While they take about $2n$ or $n$ bytes 
in the first phase of the $\LCP$-array construction, they
usually take only a fraction of $n$ bytes in the second phase.
\label{fig-cst profile dna200}}	
\end{figure}

\begin{algorithm}
\caption{Kasai et al.'s LACA ($5n$ byte semi-external version).
\label{lcp_kasai} }
\begin{tabbing}
\quad \=\quad \=\quad \=\quad \=\quad \=\quad \=\quad \=\quad \=\quad \=\quad \=\quad \=\quad \kill
01 \> \> \Keyw{for} $i \leftarrow 0$ \Keyw{to} $n-1$ \Keyw{do} $\InverseSUF[\SUF[i]] \leftarrow i$ // stream $\SUF$ from disk \\
   \> \> // store $\InverseSUF$ to disk \\
02 \> \> $\ell \leftarrow 0$ \\
03 \> \> \Keyw{for} $i \leftarrow 0$ \Keyw{to} $n-1$ \Keyw{do} \\
04 \> \> \> $\InverseSUF_i \leftarrow \InverseSUF[i]$ // stream $\InverseSUF$ from disk \\
05 \> \> \> \Keyw{if} $\InverseSUF_i>0$ \Keyw{do}   \\
06 \> \> \> \> $j \leftarrow \SUF[\InverseSUF_i-1]$ \\
07 \> \> \> \> \Keyw{while} $S[i+\ell] = S[j+\ell]$ \Keyw{do} \\
08 \> \> \> \> \> $\ell \leftarrow \ell+1$ \\
09 \> \> \> \> $\SUF[\InverseSUF_i-1] \leftarrow \ell$ // overwrite $\SUF[\InverseSUF_i-1]$ with $\LCP[\InverseSUF_i]$ \\
10 \> \> \> $\ell \leftarrow max(\ell-1,0)$ \\  
11 \> \> $\SUF[n-1] \leftarrow -1$ \\ 
12 \> \> // store overwritten $\SUF$ as result to disk and shift indices by $1$  \\
\end{tabbing}
\end{algorithm}

\begin{algorithm}
\caption{K\"arkk\"ainen et al.'s LACA ($5n$ byte semi-external version).
\label{lcp_phi}
}
\begin{tabbing}
\quad \=\quad \=\quad \=\quad \=\quad \=\quad \=\quad \=\quad \=\quad \=\quad \=\quad \=\quad \kill
01 \> \> \Keyw{for} $i \leftarrow 0$ \Keyw{to} $n-1$ \Keyw{do} $\Phi[\SUF[i]] \leftarrow \SUF[i-1]$ // stream $\SUF$ from disk  \\
02 \> \> $\ell \leftarrow 0$ \\
03 \> \> \Keyw{for} $i \leftarrow 0$ \Keyw{to} $n-1$ \Keyw{do} \\
04 \> \> \> \Keyw{while} $S[i+\ell] = S[\Phi[i]+\ell]$ \Keyw{do} \\
05 \> \> \> \> $\ell \leftarrow \ell+1$   \\
06 \> \> \> $\PLCP[i] \leftarrow \ell$  \\
07 \> \> \> $\ell \leftarrow max(\ell-1,0)$ \\
08 \> \> \Keyw{for} $i \leftarrow 0$ \Keyw{to} $n-1$ \Keyw{do} \\
09 \> \> \>  $\LCP[i] \leftarrow \PLCP[\SUF[i]]$ // stream $\SUF$ from disk, write $\LCP$ buffered to disk
\end{tabbing}
\end{algorithm}
}

\Comment{
\begin{table}
\begin{tabular}{r*{2}{|*{3}{|c}} }
 & \multicolumn{3}{|c}{character comparisons} &\multicolumn{3}{|c}{random accesses to text}  \\
            &\begin{sideways}\ \lcpKasai \end{sideways} & \begin{sideways}\ \lcpphi \end{sideways}  & \begin{sideways}\ \lcpgo \end{sideways}  &
           \begin{sideways}\ \lcpKasai \end{sideways} & \begin{sideways}\ \lcpphi \end{sideways}  & \begin{sideways}\ \lcpgo \end{sideways}  \\
DNA.50MB & 		$5.2\cdot 10^8$ & $5.2\cdot 10^8$ & 	 $2.5\cdot 10^9$ & 	 $5.2\cdot 10^8$ & 	 $5.2\cdot 10^8$ & 	 $2.0\cdot 10^8$ \\ 	 	   
English.50MB & 		$5.2\cdot 10^8$ & $5.2\cdot 10^8$ & 	 $1.2\cdot 10^{10}$ & 	 $5.2\cdot 10^8$ & 	 $5.2\cdot 10^8$ & 	 $2.0\cdot 10^8$ \\ 	 	   
dblp.xml.50MB & $5.2\cdot 10^8$ & $5.2\cdot 10^8$ & 	 $6.4\cdot 10^9$ & 	 $5.2\cdot 10^8$ & 	 $5.2\cdot 10^8$ & 	 $1.6\cdot 10^8$ \\ 	 	   
sources.50MB & $5.2\cdot 10^8$ & $5.2\cdot 10^8$ & 	 $7.7\cdot 10^9$ & 	 $5.2\cdot 10^8$ & 	 $5.2\cdot 10^8$ & 	 $1.7\cdot 10^8$ \\ 	 	   
proteins.50MB & $5.2\cdot 10^8$ & $5.2\cdot 10^8$ & 	 $7.9\cdot 10^9$ & 	 $5.2\cdot 10^8$ & 	 $5.2\cdot 10^8$ & 	 $2.3\cdot 10^8$ \\ 	 	   
\end{tabular}
\caption{Number of character comparisons and random accesses to the text during the execution of the algorithms.\label{tab-Character comparisons}}
\end{table}
} 

\end{document}